\documentclass[11pt,a4paper]{article}
\usepackage[margin=1in]{geometry}
\usepackage{amsmath,amssymb,amsfonts,setspace,url}
\usepackage{bm}
\usepackage{latexsym,graphicx}
\usepackage{amsthm}
\usepackage[singlelinecheck=false]{caption}
\usepackage{subfig}
\usepackage{float}
\usepackage{algorithmic}
\usepackage[ruled]{algorithm}
\usepackage{bigstrut,array,multirow}
\usepackage{here}
\usepackage{color}
\usepackage{hyperref}
\usepackage[capitalise]{cleveref}
\usepackage{mathpazo}

\theoremstyle{plain}
\newtheorem{theorem}{Theorem}

\newtheorem{lemma}{Lemma}

\newtheorem{corollary}{Corollary}

\theoremstyle{definition}

\providecommand{\w}{w}
\newcommand{\myw}{\mathcal{W}}

\newcommand{\abs}[1]{| #1|}

\newcommand{\overbar}[1]{\mkern 1.5mu\overline{\mkern-1.5mu#1\mkern-1.5mu}\mkern 1.5mu}

\newcommand{\floor}[1]{\left\lfloor #1 \right\rfloor}

\providecommand{\Omegae}{\Omega}

\providecommand{\MAXSAT}{\mathrm{MAXSAT}}
\providecommand{\kSAT}{k\textrm{-}\mathrm{SAT}}
\providecommand{\MAXkSAT}{\mathrm{MAX}\textrm{-}k\textrm{-}\mathrm{SAT}}
\providecommand{\MAXEkSAT}{\mathrm{MAX}\textrm{-}\mathrm{E}k\textrm{-}\mathrm{SAT}}
\providecommand{\MAXCSP}{\mathrm{MAXCSP}}

\newcommand{\poly}{\mathrm{poly}}

\newcommand{\mynb}{\Cc}

\providecommand{\Cc}{\mathcal{C}}
\providecommand{\Dd}{\mathcal{D}}

\begin{document}

\title{A Simpler Exponential-Time Approximation Algorithm for $\MAXkSAT$}
\date{}
    \author{Harry Buhrman\thanks{Quantinuum and QuSoft.}
    \and 
    Sevag Gharibian\thanks{Department of Computer Science and Institute for Photonic Quantum Systems, Paderborn University.}
    \and
    Zeph Landau\thanks{Department of Computer Science, University of California, Berkeley.}
    \and
    Fran{\c c}ois Le Gall\thanks{Graduate School of Mathematics, Nagoya University.}
    \and
    Norbert Schuch\thanks{Faculty of Mathematics and Faculty of Physics, University of Vienna.}
    \and
    Suguru Tamaki\thanks{Graduate School of Information Science, University of Hyogo.}
 }   

\maketitle
\thispagestyle{empty}

\begin{abstract}
We present an extremely simple polynomial-space exponential-time $(1-\varepsilon)$-approxima{-}tion algorithm for $\MAXkSAT$ that is (slightly) faster than the previous known polynomial-space $(1-\varepsilon)$-approximation algorithms by Hirsch (Discrete Applied Mathematics, 2003) and Escoffier, Paschos and Tourniaire (Theoretical Computer Science, 2014). Our algorithm repeatedly samples an assignment uniformly at random until finding an assignment that satisfies a large enough fraction of clauses. Surprisingly, we can show the efficiency of this simpler approach by proving that in any instance of $\MAXkSAT$ (or more generally any instance of $\MAXCSP$), an exponential number of assignments satisfy a fraction of clauses close to the optimal value.
\end{abstract}

\section{Introduction}\label{sec:intro}
\paragraph{Background.}
In 1997, Paturi, Pudl{\'{a}}k and Zane~\cite{PaturiPZ97} gave a simple randomized algorithm for $\kSAT$ running in time ``moderately'' exponential in the number of variables $n$, i.e., in time 
\begin{equation}\label{eq:mod}
    O(2^{(1-c/k)n}) \,\,\,\textrm{for some universal constant $c>0$}.
\end{equation}
While the constant $c$ has been improved over the years (e.g., by Paturi, Pudl{\'{a}}k, Saks and Zane~\cite{PaturiPSZ98} and Sch\"oning~\cite{SchoningFOCS99}), no similar running time is known for $\MAXkSAT$ with $k\ge 3$.\footnote{For $k=2$, Williams \cite{WilliamsICALP04} showed how to obtain running time $O^\ast(2^{\omega n/3})$, where $\omega$ is the exponent of matrix multiplication. Here, the notation $O^\ast(\cdot)$ removes $\poly(n)$ factors. Similarly, for lower bounds the notation $\Omega^\ast(\cdot)$ will remove $\poly(n)$ factors.}
There nevertheless exist nontrivial moderately-exponential-time \emph{approximation} algorithms for $\MAXkSAT$. 
In this work, we focus on approximation algorithms with complexity depending on the number of variables, i.e., algorithms with complexity similar to \cref{eq:mod}. There are three main prior works \cite{Alman+SODA20,Escoffier+14,Hirsch03}, which we review below. 

\paragraph{Previous approximation algorithms.} Let $f$ be any instance of $\kSAT$ with $n$ variables and $m$ clauses. For an assignment $z\in\{0,1\}^n$, we denote by $\mynb(f,z)$ the number of clauses in $f$ satisfied by $z$, and write 
\[
m^\ast=\max_{z\in\{0,1\}^n}\mynb(f,z).
\]
The following algorithms from \cite{Alman+SODA20,Escoffier+14,Hirsch03} find 
an assignment $z$ such that
\[
\mynb(f,z)\ge (1-\varepsilon)m^\ast \,.
\]

In 2003, Hirsch \cite{Hirsch03} presented two randomized polynomial-space algorithms for this task.
Hirsch's first algorithm (\cite[Theorem~1]{Hirsch03}) has complexity
\begin{equation}\label{eq:h}
O^\ast\left(\left(2-\frac{2\varepsilon}{\varepsilon+k+\varepsilon k}\right)^n\right)= O^\ast\left(2^{\left(1+\log_2\left(1-\frac{\varepsilon}{\varepsilon+k+\varepsilon k}\right)\right)n}\right)\,.
\end{equation}
It selects a random assignment and tries to improve it by repeatedly flipping a value of a variable chosen randomly from an unsatisfied clause.  
Hirsch's second algorithm (\cite[Theorem~4]{Hirsch03}) combines this idea with Sch\"oning's algorithm \cite{SchoningFOCS99} to achieve the following slightly better bound:
\begin{equation}\label{eq:hh}
O^\ast\left(\left(2-\frac{2\varepsilon}{k(1+\varepsilon)}\right)^n\right)= O^\ast\left(2^{\left(1+\log_2\left(1-\frac{\varepsilon}{k(1+\varepsilon)}\right)\right)n}\right)\,.
\end{equation}

In 2014, Escoffier, Paschos and Tourniaire \cite[Theorem 3]{Escoffier+14}
achieved complexity 
\begin{equation}\label{eq:EPT}
	O^\ast\left(2^{\left(1-\frac{\varepsilon}{1-\alpha}\right)n}\right)
	=
	O^\ast\left((2-\Omegae(\varepsilon))^n\right),
\end{equation}
where $\alpha$ is the best polynomial-time approximation ratio for $\MAXSAT$ (currently, $\alpha=0.796$~\cite{Avidor+05}). 
The algorithm from \cite{Escoffier+14} is deterministic and uses polynomial space. Note that its complexity is independent of $k$. 

In 2020, 
Alman, Chan and Williams~\cite{Alman+SODA20} obtained complexity 
\[
	O^\ast\left((2-\Omegae(\sqrt{\varepsilon}/\log(1/\varepsilon)))^n\right)
\]
with a deterministic algorithm (\cite[Theorem 3.6]{Alman+SODA20}) and complexity 
\[
	O^\ast\left((2-\Omegae(\varepsilon^{1/3}/\log^{2/3}(1/\varepsilon)))^n\right)
\]
with a randomized algorithm (\cite[discussion after Theorem 5.2]{Alman+SODA20}).
Both algorithms are more involved (for instance, they make use of expander walks and Chebyshev polynomials) and use exponential space. 

\paragraph{Our results.}
We show the following theorem. 
\begin{theorem}\label{th:main}
	There exists a polynomial-space randomized algorithm for $\MAXkSAT$ running in
	\[
	O^\ast\left(\left(2-\Omegae\left(\frac{\varepsilon}{k}\log\left(\frac{k}{\varepsilon}\right)\right)\right)^n\right)
	\]
 time that finds with high probability an assignment $z\in\{0,1\}^n$ such that $\mynb(f,z)\ge (1-\varepsilon)m^\ast$.
\end{theorem}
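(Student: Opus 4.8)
The plan is to analyze the brute-force sampling algorithm described in the abstract. On input $f$, draw assignments $z\in\{0,1\}^n$ independently and uniformly at random, compute $\mynb(f,z)$ in polynomial time for each one, repeat this for $T$ rounds, and output the best assignment found. Polynomial space is then immediate and the running time is $O^\ast(T)$, so everything reduces to choosing $T$ well and to proving the structural fact advertised in the abstract: the quantity
\[
N \;=\; \#\bigl\{\, z\in\{0,1\}^n \;:\; \mynb(f,z)\ge(1-\varepsilon)m^\ast \,\bigr\}
\]
satisfies $N\ge\tfrac12\binom{n}{\floor{\varepsilon n/(ck)}}$ for some absolute constant $c$. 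Given this, a single uniform sample is good with probability $N/2^n$, so taking $T=\Theta\!\bigl(n\cdot 2^n/\binom{n}{\floor{\varepsilon n/(ck)}}\bigr)$ rounds makes the failure probability at most $(1-N/2^n)^T\le e^{-\Omega(n)}$; and the standard estimates $\binom{n}{\beta n}=2^{H_2(\beta)n}/\poly(n)$, $2^{1-H_2(\beta)}=2-\Theta(H_2(\beta))$, and $H_2(\beta)=\Theta(\beta\log(1/\beta))$ for small $\beta$, applied with $\beta=\varepsilon/(ck)$, turn $O^\ast\bigl(2^n/\binom{n}{\floor{\beta n}}\bigr)$ into $O^\ast\bigl((2-\Omegae(\tfrac{\varepsilon}{k}\log\tfrac{k}{\varepsilon}))^n\bigr)$, which is the claimed running time.

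To prove the structural fact, fix an optimal assignment $z^\ast$, put $p=\varepsilon/(ck)$ and $t=\floor{pn}$, and for a uniformly random $t$-element subset $S\subseteq\{1,\dots,n\}$ let $z_S$ be the assignment obtained from $z^\ast$ by flipping exactly the coordinates in $S$. Distinct subsets $S$ yield distinct assignments $z_S$, so it is enough to show that $\Pr_S[\mynb(f,z_S)\ge(1-\varepsilon)m^\ast]\ge\tfrac12$; then at least $\tfrac12\binom{n}{t}$ of the $z_S$ are good and $N\ge\tfrac12\binom{n}{t}$. Since $z_S$ can only fail to satisfy a clause that $z^\ast$ already satisfied, I will control the number of such ``lost'' clauses by a first-moment argument: a clause $C$ satisfied by $z^\ast$ is lost only if $S$ meets the set of at most $k$ variables occurring in $C$, and for a uniformly random $t$-subset
\[
\Pr_S[\,C\text{ lost}\,]\;\le\;1-\frac{\binom{n-k}{t}}{\binom{n}{t}}\;\le\;\frac{kt}{n-k+1}\;\le\;2kp\;=\;\frac{2\varepsilon}{c},
\]
where the last step uses $t\le pn$ and $n\ge 2k$. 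Summing over the $m^\ast$ clauses satisfied by $z^\ast$ gives $\E_S[\#\text{lost}]\le(2\varepsilon/c)m^\ast$, and Markov's inequality with $c$ chosen so that $2/c\le\tfrac12$ (e.g.\ with $c=4$) gives $\Pr_S[\#\text{lost}\ge\varepsilon m^\ast]\le\tfrac12$; on the complementary event $\mynb(f,z_S)\ge m^\ast-\varepsilon m^\ast=(1-\varepsilon)m^\ast$, as required. The only feature of $\kSAT$ used here is that each clause involves at most $k$ variables, so the identical argument gives the corresponding bound for $\MAXCSP$ with constraints of arity at most $k$, as the abstract claims.

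I expect the structural fact to be the crux, and its conceptual key is to measure losses \emph{against the optimum} --- a clause lost by $z_S$ must have been won by $z^\ast$ --- rather than trying to lower-bound $\mynb(f,z_S)$ directly; this is exactly what makes a one-line first-moment bound suffice, and is also what forces the factor $k$, since the chance that a sparse random flip disturbs a fixed clause scales with the clause's arity. Everything else is routine: the passage between binomial coefficients and the base $2-\Omegae(\cdot)$ via entropy, the amplification to high success probability (note that $T$ depends only on $n,k,\varepsilon$ and not on the unknown $m^\ast$), the polynomial-space accounting, and the small-$n$ corner case where $\floor{\varepsilon n/(ck)}=0$ (handled by requiring $n$ to exceed a fixed polynomial in $k/\varepsilon$, smaller instances being dispatched directly).
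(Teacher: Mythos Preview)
Your proof is correct and complete, but it takes a genuinely different route from the paper's. The paper proves a general lower bound on the number of near-optimal assignments (Theorem~\ref{th:lb}) by a \emph{deterministic} heavy/light decomposition: it singles out the set $S_\delta$ of variables whose contribution to the weighted length is at most $\delta\ell/n$, shows $|S_\delta|\ge(\delta-1)n/\delta$ by averaging, and then observes that flipping any $r\approx \varepsilon wn/(\delta\ell)$ of these ``light'' variables from an optimal assignment can lose weight at most $\varepsilon w$; counting such subsets via entropy and optimizing over~$\delta$ gives the bound, and Theorem~\ref{th:main} follows by specializing to $\ell\le km$, $m^\ast\ge m/2$, $\delta=2$. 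You instead flip a uniformly random $t$-subset of \emph{all} variables and use a first-moment/Markov argument to show that at least half of these subsets lose at most $\varepsilon m^\ast$ clauses. Your argument is shorter and more elementary (no $S_\delta$, no Lemma~\ref{ref:lemma}, no optimization over $\delta$), and it directly yields the $(1-\varepsilon)m^\ast$ guarantee without passing through $m^\ast\ge m/2$. What the paper's approach buys is greater generality: Theorem~\ref{th:lb} is stated for arbitrary weighted $\MAXCSP$ in terms of the weighted length~$\ell$ rather than a hard arity bound~$k$, so it can exploit instances whose average arity is much smaller than the maximum, and it explicitly exhibits the good assignments rather than establishing their existence by averaging.
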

 Note that asymptotically, the bound of \cref{th:main} is better than Hirsch's algorithm (\cref{eq:hh}), and for $k=O(1)$ is also better than Escoffier-Paschos-Tourniaire's algorithm (\cref{eq:EPT}). While not faster than the algorithms by Alman, Chan and Williams \cite{Alman+SODA20} discussed above, our algorithm uses only polynomial space. To our knowledge, this is the fastest polynomial-space algorithm for $\MAXkSAT$ when $k=O(1)$.

Our algorithm is extremely simple: It samples an assignment uniformly at random and evaluates the number of clauses it satisfies. After repeating this process a given number of times, it outputs the assignment that satisfies the maximum number of clauses. Our algorithm does not work only for $\MAXkSAT$, but also for arbitrary weighted $\MAXCSP$ (\Cref{th:class2}). We prove its correctness and analyze its complexity by showing a lower bound on the number of assignments that satisfy a large number of constraints (\cref{th:lb}). 

\paragraph{Relation with prior work.}
A recent work by Buhrman, Gharibian, Landau, Le Gall, Schuch and Tamaki \cite{Buhrman+25} has considered a similar approach for computing approximate solutions of optimization problems in the context of quantum computing. Ref.~\cite{Buhrman+25} focuses on (classical and quantum) algorithms  for the $k$-local Hamiltonian problem, a quantum generalization of $\MAXkSAT$. While these algorithms can be applied to $\MAXkSAT$ (which is a special case of the $k$-local Hamiltonian problem), the running time obtained is worse than the running time of Hirsch's algorithm. In this work, we obtain a better complexity by exploiting specific properties of $\MAXkSAT$. 

\section{General Result for \texorpdfstring{$\MAXCSP$}{MAXCSP}}\label{sec:CSP}
In this section, we consider the (weighted) maximum constraint satisfaction problem ($\MAXCSP$); \Cref{th:main} for MAX-$k$-SAT follows as a corollary, as we show in \Cref{sec:SAT}. 

Let $f$ be an instance of $\MAXCSP$ with~$n$ Boolean variables $x_1,\ldots, x_n$ and $m$ constraints $C_1,\ldots,C_m$. Each constraint $C_i$ has a weight $\w_i$, which is a positive real number. We denote by~$a_i$ the number of variables in $C_i$.  The \emph{total weight} of $f$, which we denote by~$\w$, is the sum of the weights of all constraints:
\[
	\w = \sum_{i=1}^m w_i\,.
\]
We define the \emph{weighted length} of $f$, denoted by $\ell$, as 
\[
	\ell = \sum_{i=1}^m a_iw_i\,.
\]
For each variable $x_i$, we define its \emph{contribution to $f$}, denoted $\ell_i$, as the sum of the weights of all constraints including it or its negation. We obviously have 
\begin{equation}\label{eq:cont}
	\ell=\sum_{i=1}^n \ell_i \,.
\end{equation}

For an assignment $z\in\{0,1\}^n$, the \emph{weight} of $z$, which we denote by $\myw(f,z)$, is the sum of the weights of all constraints in $f$ satisfied by $z$.
We write 
\[
w^\ast=\max_{z\in\{0,1\}^n}\myw(f,z).
\]
For any $\varepsilon\in (0,1]$, let $\Dd(f,\varepsilon)$ denote the number of assignments $z\in\{0,1\}^n$ such that 
\[
\myw(f,z)\ge w^\ast-\varepsilon  w\,.
\]
(Note that the right-hand side is not $(1-\varepsilon)w^\ast$.)
Our main technical contribution is the following theorem. 
\begin{theorem}\label{th:lb}
	For any $\varepsilon\in(0,1]$,
	\[
		\Dd(f,\varepsilon)=
		\Omega^\ast\left(	
		\max_{\delta\ge 1+\frac{\varepsilon w}{\ell}}\left\{
		2^{H\left(\frac{\varepsilon w}{(\delta-1)\ell}\right) \frac{\delta-1}{\delta}n}
		\right\}
		\right).
	 \]
	Here $H(\cdot)$ denotes the binary entropy function $H(p)=-p\log p - (1-p)\log (1-p)$ for $p\in [0,1]$.\footnote{Here, we use the standard convention $0\log(0) =0$ that ensures continuity at $p=0$ and $p=1$.}
\end{theorem}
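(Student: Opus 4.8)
The plan is to start from an optimal assignment $z^\ast$ (so $\myw(f,z^\ast)=w^\ast$; only its existence is needed, not its identity) and produce many nearby assignments that remain within $\varepsilon w$ of the optimum. The crucial observation is a robustness estimate: for any subset $S\subseteq\{1,\ldots,n\}$, the assignment obtained from $z^\ast$ by flipping exactly the variables in $S$ can change the satisfaction status only of constraints containing some variable of $S$, and the total weight of those constraints is at most $\sum_{i\in S}\ell_i$, since each such constraint's weight is counted (at least once) in $\sum_{i\in S}\ell_i$ by the definition of the contributions. Hence the flipped assignment has weight at least $w^\ast-\sum_{i\in S}\ell_i$, so whenever $\sum_{i\in S}\ell_i\le\varepsilon w$ it is one of the assignments counted by $\Dd(f,\varepsilon)$. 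Distinct sets $S$ give distinct assignments, so $\Dd(f,\varepsilon)$ is at least the number of subsets $S$ with $\sum_{i\in S}\ell_i\le\varepsilon w$, and the problem reduces to lower bounding that count.

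To count such subsets I would restrict to the ``light'' variables. Fix any $\delta\ge 1+\varepsilon w/\ell$ (the range is nonempty because $a_i\ge1$ forces $\ell\ge w$, so $\varepsilon w/\ell\le1$). Since $\sum_{i=1}^n\ell_i=\ell$, a Markov-type argument shows fewer than $n/\delta$ variables can have $\ell_i>\delta\ell/n$, so there is a set $L$ of at least $\frac{\delta-1}{\delta}n$ variables with $\ell_i\le\delta\ell/n$. For every $S\subseteq L$ with $|S|=t$ we then have $\sum_{i\in S}\ell_i\le t\delta\ell/n$, which is at most $\varepsilon w$ as soon as $t\le\frac{\varepsilon w\,n}{\delta\ell}$; the hypothesis $\delta\ge1+\varepsilon w/\ell$, equivalently $\frac{\varepsilon w}{(\delta-1)\ell}\le1$, exactly guarantees that this $t$ does not exceed $|L|$, so such $S$ exist. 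Taking $t=\lfloor\frac{\varepsilon w\,n}{\delta\ell}\rfloor$ gives $\Dd(f,\varepsilon)\ge\binom{|L|}{t}\ge\binom{\lfloor\frac{\delta-1}{\delta}n\rfloor}{\lfloor\frac{\varepsilon w\,n}{\delta\ell}\rfloor}$.

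Finally I would invoke the standard entropy bound $\binom{N}{\lambda N}\ge\frac{1}{N+1}\,2^{H(\lambda)N}$ with $N\approx\frac{\delta-1}{\delta}n$ and $\lambda=t/N\approx\frac{\varepsilon w}{(\delta-1)\ell}$, yielding $\Dd(f,\varepsilon)=\Omega^\ast\!\big(2^{H(\varepsilon w/((\delta-1)\ell))\,\frac{\delta-1}{\delta}n}\big)$; since $\delta$ was arbitrary in the admissible range, maximizing over $\delta\ge1+\varepsilon w/\ell$ gives the claimed bound. The one place that needs genuine care — and, I expect, the only real obstacle — is the bookkeeping around the floor functions: replacing $|L|$ and $t$ by nearby integers perturbs the exponent by at most $O(\log n)$ (using that $H$ is continuous, with unbounded derivative only at the endpoints $0$ and $1$, so a change of $O(1/N)$ in $\lambda$ changes $H(\lambda)N$ by $O(\log n)$) and the prefactor by at most a polynomial in $n$, all of which is swallowed by the $\Omega^\ast(\cdot)$ notation.
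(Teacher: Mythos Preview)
Your proposal is correct and essentially identical to the paper's proof: both restrict to the ``light'' variables via a Markov/averaging argument, flip at most $r\approx\varepsilon w\,n/(\delta\ell)$ of them starting from an optimal assignment, and lower-bound the resulting binomial coefficient via the entropy formula before optimizing over~$\delta$. The only cosmetic difference is that where you first shrink $|L|$ down to $\lfloor\tfrac{\delta-1}{\delta}n\rfloor$ using monotonicity of $\binom{N}{t}$ in $N$ and then handle the floors, the paper instead keeps $|S_\delta|$ intact and invokes a one-line lemma (that $x\mapsto H(r/x)\,x$ is nondecreasing for $x\ge r$) to pass to $\tfrac{\delta-1}{\delta}n$ in the exponent.
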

We will use the following lemma to prove \cref{th:lb}.
\begin{lemma}\label{ref:lemma}
    For any $y>0$, any $x\ge y$ and any $r\in[0,y]$,
    \[
    H\left(\frac{r}{x}\right)x\ge H\left(\frac{r}{y}\right)y\,.
    \]
\end{lemma}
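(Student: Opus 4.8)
The plan is to reduce the two-variable inequality to the monotonicity of a single-variable function. Define $g(t) := t\,H(r/t)$ for $t \in [r,\infty)$, which is well defined since then $r/t \in [0,1]$. Because the hypotheses give $x \ge y \ge r$, it suffices to show that $g$ is non-decreasing on $[r,\infty)$: then $g(x) \ge g(y)$, which is exactly $H(r/x)\,x \ge H(r/y)\,y$.

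First I would dispose of the degenerate case $r = 0$ separately, where both sides vanish by the convention $0\log 0 = 0$. For $r > 0$, I would put $g$ in closed form by expanding the entropy and cancelling the $\log t$ terms:
\[
g(t) = -r\log(r/t) - (t-r)\log\!\left(\frac{t-r}{t}\right) = t\log t - (t-r)\log(t-r) - r\log r.
\]
Differentiating gives $g'(t) = \log t - \log(t-r) = \log\frac{t}{t-r} \ge 0$ for every $t > r$, since $0 < t-r \le t$. Together with the continuity of $g$ at $t=r$ (where $g(r)=0$, again via $0\log 0 = 0$), this shows $g$ is non-decreasing on $[r,\infty)$, which finishes the proof.

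A cleaner, calculus-free alternative uses only concavity of $H$: since $H$ is concave with $H(0)=0$, the chord-slope map $p \mapsto H(p)/p$ is non-increasing on $(0,1]$ — for $0<p\le q\le1$ write $p=(p/q)\,q + (1-p/q)\cdot 0$ and apply concavity to get $H(p)\ge (p/q)H(q)$. Since $x \ge y$ forces $r/x \le r/y$, we obtain $H(r/x)/(r/x) \ge H(r/y)/(r/y)$, and multiplying both sides by $r>0$ yields $x\,H(r/x)\ge y\,H(r/y)$.

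I do not expect a genuine obstacle here; the only subtlety is bookkeeping at the boundary — the cases $r=0$ and $r = y$ (hence possibly $r=x$) — where one must lean on the $0\log 0 = 0$ convention so that $H$ and $g$ stay continuous, and one should note that the base of the logarithm is irrelevant to the sign of $g'$ since any additive constant from $\frac{d}{dt}(t\log t)$ cancels.
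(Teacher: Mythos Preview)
Your proposal is correct, and your ``calculus-free alternative'' is exactly the paper's proof: the paper simply asserts that $\alpha\mapsto H(\alpha)/\alpha$ is decreasing on $(0,1]$, sets $\alpha_1=r/x\le\alpha_2=r/y$, and multiplies through by $r$. Your first approach via $g(t)=tH(r/t)$ and $g'(t)=\log\frac{t}{t-r}\ge 0$ is a harmless reparametrization of the same monotonicity fact, with the bonus that you actually justify the monotonicity (via calculus in the first version, via concavity in the second) rather than just stating it.
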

\begin{proof}
The statement is trivial if $r=0$. We assume below that $r>0$.
The function $f(\alpha)=\frac{H(\alpha)}{\alpha}$ is decreasing over $(0,1]$, i.e., $f(\alpha_1)\ge f(\alpha_2)$ for any $0<\alpha_1\le \alpha_2\le 1$. By taking 
$\alpha_1=r/x$ and $\alpha_2=r/y$, we obtain
\[
    H\left(\frac{r}{x}\right)\frac{x}{r}\ge H\left(\frac{r}{y}\right)\frac{y}{r}\,, 
\]
which gives the statement.
\end{proof}
\begin{proof}[Proof of \cref{th:lb}]
For any 
\begin{equation}\label{eq:delta}
    \delta\ge 1+\frac{\varepsilon w}{\ell},
\end{equation}
let $S_\delta$ denote the set of variables that have contribution to $f$ at most $\delta \ell/n$. Since the weighted length of the formula is $\ell$, using \cref{eq:cont} we get the inequality 
\[
\left(n-\abs{S_\delta}\right)\cdot \frac{\delta \ell}{n} \le\ell,
\]
which implies
\begin{equation}\label{eq:cl}
	\abs{S_\delta}\ge \frac{(\delta-1) n}{\delta}.
\end{equation}

We now take 
\[
r=\floor{n \frac{\varepsilon w}{\delta\ell}}\,.
\] 
Observe that 
\[
r\le \frac{(\delta-1) n}{\delta}\le\abs{S_\delta}, 
\]
from \cref{eq:delta} and \cref{eq:cl}.

Let $z$ be an assignment with maximum weight, i.e., $\myw(f,z)=w^\ast$. 
Let $\Sigma\subseteq\{0,1\}^n$ denote the set of all assignments that are obtained from $z^\ast$ by flipping at most $r$ variables chosen from~$S_\delta$. 
For any $z\in \Sigma$, we have
\[
	\myw(f,z)
	\ge w^\ast - \frac{\delta \ell}{n}r
	\ge w^\ast -\varepsilon w.
\]

Observe that\footnote{Note that the first inequality is tight (up to a polynomial factor) only when $r \leq \abs{S_\delta}/2$, that is,  when $\delta \ge 1+2\frac{\varepsilon w}{\ell}$. While the lower bound could be improved for smaller $\delta$, this is enough for our purpose since the optimal value of~$\delta$ will satisfy the inequality $\delta \ge 1+2\frac{\varepsilon w}{\ell}$.} 
\begin{equation}\label{eq:tight}
\abs{\Sigma}=
\sum_{i=0}^r
\binom{\abs{S_\delta}}{i}
\ge \binom{\abs{S_\delta}}{r} \ge  \frac{2^{H\left(\frac{r}{\abs{S_\delta}}\right) \abs{S_\delta}}}{\abs{S_\delta}+1}
=\Omega^\ast\left(2^{H\left(\frac{r}{\abs{S_\delta}}\right) \abs{S_\delta}}\right)=
\Omega^\ast\left(2^{H\left(\frac{r\delta}{(\delta-1) n}\right) \frac{(\delta-1)n}{\delta}}\right),
\end{equation}
where we used \cref{ref:lemma} with $x=|S_\delta|$ and $y=\frac{(\delta-1)n}{\delta}$.
We thus obtain
\[
\abs{\Sigma}=\Omega^\ast\left(2^{H\left(\frac{\varepsilon w}{(\delta-1)\ell}\right) \frac{\delta-1}{\delta}n}\right).
\]
Optimizing over all $\delta$ satisfying \cref{eq:delta} gives the statement of the theorem.
\end{proof}

Theorem~\ref{th:lb} immediately leads to a randomized polynomial-space approximation algorithm for $\MAXCSP$. The algorithm samples an assignment uniformly at random and computes its weight by evaluating the sum of the weights of all constraints it satisfies. After repeating this process a given number of times, we output the assignment with maximum weight. Theorem~\ref{th:lb} enables us to analyze the number of repetitions needed to obtain a good approximation with high probability:  
\begin{theorem}\label{th:class2}
	For any $\varepsilon\in(0,1]$, there exists a polynomial-space randomized algorithm for $\MAXCSP$ running in
	\[
	O^\ast\left(
    \min_{\delta\ge 1+\frac{\varepsilon w}{\ell}}\left\{
	2^{\left(1-H\left(\frac{\varepsilon w}{(\delta-1)\ell}\right) \frac{\delta-1}{\delta}\right)n}\right\}
	\right)
	\]
	time that finds with high probability an assignment $z\in\{0,1\}^n$ such that $\myw(f,z)\ge w^\ast-\varepsilon w$.
\end{theorem}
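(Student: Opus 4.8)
The plan is to analyze the trivial random-sampling procedure described just above the statement, using \cref{th:lb} purely as a lower bound on the \emph{density} of good assignments. Let $p$ be the probability that a single uniformly random $z\in\{0,1\}^n$ satisfies $\myw(f,z)\ge w^\ast-\varepsilon w$; by definition $p=\Dd(f,\varepsilon)/2^n$. Dividing the bound of \cref{th:lb} by $2^n$ yields
\[
p=\Omega^\ast\!\left(\max_{\delta\ge 1+\varepsilon w/\ell}2^{-\left(1-H\left(\frac{\varepsilon w}{(\delta-1)\ell}\right)\frac{\delta-1}{\delta}\right)n}\right)=\Omega^\ast\!\left(2^{-Kn}\right),
\]
where $K=\min_{\delta\ge 1+\varepsilon w/\ell}\bigl(1-H(\tfrac{\varepsilon w}{(\delta-1)\ell})\tfrac{\delta-1}{\delta}\bigr)$ is exactly the exponent in the theorem; note that maximizing the density over $\delta$ is the same as minimizing this running-time exponent, so the single best $\delta$ is used automatically.

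Next I would pin down the number of repetitions. Writing the $\Omega^\ast$ above explicitly as $p\ge 2^{-Kn}/q(n)$ for some polynomial $q$, set $T=\lceil n\,q(n)\,2^{Kn}\rceil=O^\ast(2^{Kn})$. Each of the $T$ independent rounds draws $n$ fresh bits and evaluates $\myw(f,z)$ by scanning the constraints, costing time polynomial in the input size; only the best assignment seen so far (and its weight) is stored, so space stays polynomial. Hence the total time is $O^\ast(2^{Kn})$, matching the claim, and to make the algorithm fully self-contained it can first read off $w$ and $\ell$ from the instance and carry out the one-dimensional minimization over $\delta$ to determine $K$ and $T$.

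For correctness, independence of the samples gives that the probability that \emph{no} round produces a good assignment is at most $(1-p)^T\le e^{-pT}\le e^{-n}$, which is exponentially small, so the output satisfies $\myw(f,z)\ge w^\ast-\varepsilon w$ with high probability. I expect no genuinely hard step here: all the difficulty sits in \cref{th:lb}, and what remains is the standard amplification/coupon-collector argument. The only points needing a little care are (i) absorbing the $\poly(n)$ slack hidden in the $\Omega^\ast(\cdot)$ of \cref{th:lb} into $T$ without changing the stated exponent---which is exactly what $O^\ast$ permits---and (ii) the bookkeeping that the ``$\min$ over $\delta$'' in the running time corresponds to the ``$\max$ over $\delta$'' in \cref{th:lb}.
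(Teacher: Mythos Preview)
Your proposal is correct and is exactly the argument the paper has in mind: the paper does not give a detailed proof of \cref{th:class2} but simply states that \cref{th:lb} ``immediately leads'' to the algorithm by bounding the number of repetitions needed, and you have accurately filled in the standard amplification details (sample, keep the best, repeat $O^\ast(2^{Kn})$ times). The only addition is that the paper also remarks, right after the theorem, that the optimal $\delta$ cannot be expressed analytically but can be found numerically once $\varepsilon$, $w$, $\ell$ are fixed---consistent with your note about carrying out the one-dimensional minimization.
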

While the optimal value of $\delta$ cannot be expressed analytically, when $\varepsilon$, $w$ and $\ell$ are fixed it can be found numerically. If we are only interested in the asymptotic dependence in~$\varepsilon$ of the exponent, we can simply take $\delta=2$ (as we will do in \cref{sec:SAT} to prove \cref{th:main}). 

If we know a lower bound on $w^\ast$, then we can guarantee that the assignment found by the algorithm satisfies $\myw(f,z)\ge (1-\varepsilon)w^\ast$ by decreasing the value of $\varepsilon$:
\begin{corollary}\label{th:CSP}
	Assume that we know a lower bound $\overbar{w}$ on $w^\ast$. For any $\varepsilon\in(0,1]$,
	there exists a polynomial-space randomized algorithm for $\MAXCSP$ running in
	\[
	O^\ast\left(\min_{\delta\ge 1+\frac{\varepsilon w}{\ell}}\left\{
	2^{\left(1-H\left(\frac{\varepsilon \overbar{w}}{(\delta-1)\ell}\right) \frac{\delta-1}{\delta}\right)n}\right\}
	\right)
	\]
	time that finds with high probability an assignment $z\in\{0,1\}^n$ such that $\myw(f,z)\ge (1-\varepsilon)w^\ast$.
\end{corollary}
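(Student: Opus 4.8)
The plan is to derive this directly from \cref{th:class2} by rescaling the approximation parameter. First I would note that the weight of any assignment is at most the total weight, so $w^\ast\le w$, and combined with the hypothesis $\overbar{w}\le w^\ast$ this gives $\overbar{w}\le w$. Hence, assuming $\overbar{w}>0$ (the case $\overbar{w}=0$ being vacuous), the quantity $\varepsilon':=\varepsilon\overbar{w}/w$ lies in $(0,1]$ and is computable from the instance together with $\overbar{w}$, so \cref{th:class2} may be invoked with $\varepsilon$ replaced by $\varepsilon'$.

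Running that algorithm produces, with high probability, an assignment $z$ with $\myw(f,z)\ge w^\ast-\varepsilon' w=w^\ast-\varepsilon\overbar{w}$, and since $\overbar{w}\le w^\ast$ this is at least $w^\ast-\varepsilon w^\ast=(1-\varepsilon)w^\ast$, which is the desired guarantee; the polynomial-space bound is inherited verbatim. For the running time, substituting $\varepsilon' w=\varepsilon\overbar{w}$ into the bound of \cref{th:class2} gives $O^\ast(\min_{\delta\ge 1+\varepsilon\overbar{w}/\ell}2^{(1-H(\varepsilon\overbar{w}/((\delta-1)\ell))(\delta-1)/\delta)n})$. Because $\overbar{w}\le w$, the range $\{\delta:\delta\ge 1+\varepsilon w/\ell\}$ appearing in the corollary is contained in $\{\delta:\delta\ge 1+\varepsilon\overbar{w}/\ell\}$, so restricting the minimization to the former can only increase its value; this yields a valid upper bound of the claimed form.

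The argument is essentially bookkeeping, so I do not anticipate a real obstacle. The only points needing a little care are checking that $\varepsilon'\in(0,1]$ so that \cref{th:class2} genuinely applies, and observing that shrinking the feasible set of $\delta$ produces a (possibly loose) upper bound on the running time without weakening the correctness guarantee, which holds for every admissible choice of~$\delta$.
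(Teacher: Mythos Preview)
Your proposal is correct and follows essentially the same approach as the paper: both replace $\varepsilon$ by $\varepsilon'=\varepsilon\overbar{w}/w$ and invoke the preceding result, then use $\overbar{w}\le w^\ast$ to convert the additive guarantee into a multiplicative one. Your write-up is in fact more careful than the paper's terse proof, since you explicitly verify $\varepsilon'\in(0,1]$ and address the mismatch in the $\delta$-range (the paper simply writes the bound with $\delta\ge 1+\varepsilon w/\ell$ without comment); one small quibble is that the case $\overbar{w}=0$ is not literally ``vacuous'' but rather trivial, since then $H(0)=0$ and the stated bound degenerates to $O^\ast(2^n)$, which is achieved by brute force.
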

\begin{proof}
	We apply \cref{th:lb} with $\frac{\varepsilon\overbar w}{w}$ instead of $\varepsilon$. The algorithm finds with high probability an assignment $z\in\{0,1\}^n$ such that 
	\[
		\myw(f,z)\ge w^\ast-\frac{\varepsilon\overbar w}{w} w=w^\ast-\varepsilon\overbar{w}\ge (1-\varepsilon)w^\ast,
	\]
	as claimed.
\end{proof}

\section{Application to \texorpdfstring{$\MAXkSAT$}{MAX-k-SAT}}\label{sec:SAT}
In this section, we apply the results from \cref{sec:CSP} to (unweighted) $\MAXkSAT$ and prove \cref{th:main}. We use the notations from \cref{sec:intro}. Since each clause corresponds to a constraint of unit weight, we have $w=m$ and $w^\ast=m^\ast$.

As an immediate application of \cref{th:CSP}, we obtain the following result for $\MAXEkSAT$, the version of (unweighted) $\MAXkSAT$ in which each clause has \emph{exactly} $k$ literals.
\begin{corollary}\label{cor:maxesat}
        For any $\varepsilon\in(0,1]$, there exists a polynomial-space randomized algorithm for $\MAXEkSAT$ running in
	\[
	O^\ast\left(	\min_{\delta\ge 1+\frac{\varepsilon}{k}}\left\{
	2^{\left(1-H\left(\frac{\varepsilon}{k} \frac{2^{k}-1}{2^k(\delta-1)}\right) \frac{\delta-1}{\delta}\right)n}\right\}
	\right)
	\]
	time
	that finds with high probability an assignment $z\in\{0,1\}^n$ such that $\mynb(f,z)\ge (1-\varepsilon)m^\ast$.
\end{corollary}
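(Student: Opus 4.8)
The plan is to derive \cref{cor:maxesat} as a direct instantiation of \cref{th:CSP}. First I would record the two structural parameters of a $\MAXEkSAT$ instance $f$ viewed as a $\MAXCSP$ instance: since every clause is a unit-weight constraint on exactly $k$ variables, the total weight is $w=m$ and the weighted length is $\ell=\sum_{i=1}^m a_iw_i=km$. In particular $\frac{\varepsilon w}{\ell}=\frac{\varepsilon}{k}$, which already explains why the minimization in the statement ranges over $\delta\ge 1+\frac{\varepsilon}{k}$.

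Second, and this is the only genuinely new ingredient, I would establish the lower bound $m^\ast\ge\frac{2^k-1}{2^k}\,m$ on the optimum by a first-moment argument: a clause with exactly $k$ distinct literals is falsified by a uniformly random assignment with probability exactly $2^{-k}$, so by linearity of expectation the expected number of clauses satisfied by a uniform random assignment is $(1-2^{-k})m$, and hence some assignment satisfies at least $\frac{2^k-1}{2^k}m$ of them. This yields a valid choice $\overbar{w}=\frac{2^k-1}{2^k}m$ of a known lower bound on $w^\ast=m^\ast$ (and $\overbar{w}\le w$, as required).

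Third, I would substitute $w=m$, $\ell=km$, and $\overbar{w}=\frac{2^k-1}{2^k}m$ into the running-time bound of \cref{th:CSP}. The argument of the entropy function becomes
\[
\frac{\varepsilon\overbar{w}}{(\delta-1)\ell}=\frac{\varepsilon}{k}\cdot\frac{2^k-1}{2^k(\delta-1)},
\]
so the exponent of $n$ is $1-H\!\left(\frac{\varepsilon}{k}\frac{2^k-1}{2^k(\delta-1)}\right)\frac{\delta-1}{\delta}$, matching the claimed bound, and the minimization is over the same range $\delta\ge 1+\frac{\varepsilon}{k}$. \cref{th:CSP} simultaneously guarantees that the returned assignment $z$ satisfies $\mynb(f,z)=\myw(f,z)\ge(1-\varepsilon)w^\ast=(1-\varepsilon)m^\ast$ with high probability and that the algorithm uses polynomial space, which completes the proof.

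There is no hard step here: all the real work lives in \cref{th:CSP}, and this corollary is essentially bookkeeping together with the elementary bound on $m^\ast$. The only point that deserves a moment's care is that the lower bound $\frac{2^k-1}{2^k}m$ genuinely needs every clause to have \emph{exactly} $k$ literals (a shorter clause is falsified with larger probability, which would only weaken the bound), which is precisely why the statement is for $\MAXEkSAT$ rather than general $\MAXkSAT$; the passage to $\MAXkSAT$ in \cref{th:main} will instead fix $\delta=2$ and use only the cruder bound $m^\ast\ge m/2$.
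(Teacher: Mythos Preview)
Your proposal is correct and follows exactly the paper's approach: apply \cref{th:CSP} with $w=m$, $\ell=km$, and $\overbar{w}=\frac{2^k-1}{2^k}m$. The paper's own proof is a one-line remark recording precisely these substitutions (calling the bound on $m^\ast$ ``trivial''), so your write-up simply expands the same argument with the probabilistic justification of that bound.
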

\begin{proof}
	For $\MAXEkSAT$ we have $\ell = k m$ and can use the trivial lower bound $\overbar w =\frac{2^{k}-1}{2^k}m$ on the maximum number of satisfied clauses.
\end{proof}
In \cref{table}, we compare the complexity of \cref{cor:maxesat} (for the optimal value of $\delta$, which is obtained numerically) with the running time of the complexity of Hirsch's algorithm (\cref{eq:hh}).

\begin{table}[h!t]
\centering
\caption{Comparison of the exponents obtained for $\MAXEkSAT$ by Hirsch's algorithm (\cref{eq:hh}) and our algorithm (\cref{cor:maxesat}) for $k\in\{3,4,5,6\}$ and several values
$\varepsilon\le 1-\frac{2^{k}-1}{2^k}$. More precisely, the table compares 
\[
1+\log_2\left(1-\frac{\varepsilon}{k(1+\varepsilon)}\right)
\]
(for Hirsch's algorithm) with 
\[
\min_{\delta\ge 1+\frac{\varepsilon}{k}}\left\{
1-H\left(\frac{\varepsilon}{k} \frac{2^{k}-1}{2^k(\delta-1)}\right) \frac{\delta-1}{\delta}\right\}
\]
(for our algorithm).
}\label{table}
\vspace{10mm}
\renewcommand{\arraystretch}{1.1}
 \begin{tabular}{|l|l|c|c|}
 	\cline{3-4}
	\multicolumn{2}{l|}{} & Hirsch's algorithm & Our algorithm \\
    \multicolumn{2}{l|}{}&(\cref*{eq:hh})&(\cref*{cor:maxesat})\\\hline
\multirow{9}{*}{$\,\,k=3\,\,$}&$\varepsilon=1/8$& 0.9455522 &0.8740555\\\cline{2-4}
&$\varepsilon=0.1$& 0.9556059 &0.8923639\\\cline{2-4}
&$\varepsilon=0.05$& 0.9769164 &0.9351926\\\cline{2-4}
&$\varepsilon=0.04$& 0.9813843 &0.9452549\\\cline{2-4}
&$\varepsilon=0.03$& 0.9859248 &0.9561051\\\cline{2-4}
&$\varepsilon=0.02$& 0.9905397 &0.9680331\\\cline{2-4}
&$\varepsilon=0.01$& 0.9952308 &0.9816589\\\cline{2-4}
&$\varepsilon=0.001$& 0.9995195 &0.9973496\\\cline{2-4}
&$\varepsilon=0.0001$& 0.9999519 &0.9996498\\\hline
\multirow{8}{*}{$\,\,k=4\,\,$}&$\varepsilon=1/16$& 0.9786263 &0.9349755\\\cline{2-4}
&$\varepsilon=0.05$& 0.9827220 &0.9450690\\\cline{2-4}
&$\varepsilon=0.04$& 0.9860608 &0.9537019\\\cline{2-4}
&$\varepsilon=0.03$& 0.9894565 &0.9629761\\\cline{2-4}
&$\varepsilon=0.02$& 0.9929106 &0.9731266\\\cline{2-4}
&$\varepsilon=0.01$& 0.9964245 &0.9846550\\\cline{2-4}
&$\varepsilon=0.001$& 0.9996396 &0.9978062\\\cline{2-4}
&$\varepsilon=0.0001$& 0.9999639 &0.9997120\\\hline
\multirow{6}{*}{$\,\,k=5\,\,$}&$\varepsilon=1/32$& 0.9912298 &0.9670797\\\cline{2-4}
&$\varepsilon=0.03$& 0.9915714 &0.9681233\\\cline{2-4}
&$\varepsilon=0.02$& 0.9943312 &0.9769253\\\cline{2-4}
&$\varepsilon=0.01$& 0.9971403 &0.9868757\\\cline{2-4}
&$\varepsilon=0.001$& 0.9997117 &0.9981403\\\cline{2-4}
&$\varepsilon=0.0001$& 0.9999711 &0.9997571\\\hline
\multirow{4}{*}{$\,\,k=6\,\,$}&$\varepsilon=1/64$& 0.9962960 &0.9834889\\\cline{2-4}
&$\varepsilon=0.01$& 0.9976173 &0.9885602\\\cline{2-4}
&$\varepsilon=0.001$& 0.9997598 &0.9983910\\\cline{2-4}
&$\varepsilon=0.0001$& 0.9999760 &0.9997908\\\hline
\end{tabular}
\end{table}

For (unweighted) $\MAXkSAT$, we obtain the best complexity by considering the number of clauses of each length. For any $i\in\{1,\ldots,k\}$, let $m_i$ denote the number of clauses of length~$i$ (note that $m=m_1+\cdots+m_k$). We can then use the lower bound
\[
	m^\ast\ge \sum_{i=1}^k \frac{2^i-1}{2^i} m_i
\]
in \cref{th:CSP}. For simplicity, we will use the weaker generic bound
\[
	m^\ast\ge \frac{m}{2}
\]
and choose $\delta=2$ in \cref{th:CSP} to derive the following result.
\begin{corollary}\label{cor:maxsat}
     For any $\varepsilon\in(0,1]$,
	there exists a polynomial-space randomized algorithm for $\MAXkSAT$ running in
	\[
	O^\ast\left(	
	2^{\left(1-\frac{1}{2}H\left(\frac{\varepsilon}{2k} \right) \right)n}
	\right)
	\]
	time
	that finds with high probability an assignment $z\in\{0,1\}^n$ such that $\mynb(f,z)\ge (1-\varepsilon)m^\ast$.
\end{corollary}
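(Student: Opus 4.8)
The plan is to derive \cref{cor:maxsat} directly from \cref{th:CSP} by substituting the parameters of unweighted $\MAXkSAT$ and making the convenient choice $\delta = 2$. Since each clause has unit weight, $w = m$ and $w^\ast = m^\ast$. A uniformly random assignment satisfies any fixed clause with probability at least $1/2$ (a non-tautological clause on $a_i$ variables is violated by exactly one of the $2^{a_i}$ local assignments, so the satisfaction probability is $1-2^{-a_i}\ge 1/2$); hence the expected number of satisfied clauses is at least $m/2$, so $m^\ast\ge m/2$ and we may take $\overbar{w}=m/2$ in \cref{th:CSP}. For the weighted length, every clause contains at least one and at most $k$ variables, so $\ell=\sum_i a_i\in[m,km]$.

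Next I would check that $\delta=2$ lies in the feasible range $\delta\ge 1+\varepsilon w/\ell$ of \cref{th:CSP}: using $\ell\ge m$ and $\varepsilon\le 1$ we get $\varepsilon w/\ell = \varepsilon m/\ell \le \varepsilon \le 1$, so $1+\varepsilon w/\ell\le 2$. Plugging $\delta=2$ and $\overbar w=m/2$ into the exponent of \cref{th:CSP} gives a polynomial-space randomized algorithm, with the claimed approximation guarantee $\mynb(f,z)\ge(1-\varepsilon)m^\ast$ inherited verbatim, whose running time is $O^\ast\!\left(2^{\left(1-\frac12 H\left(\frac{\varepsilon m}{2\ell}\right)\right)n}\right)$.

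It then remains to replace the instance-dependent quantity $\varepsilon m/(2\ell)$ by the clean value $\varepsilon/(2k)$. From $m\le\ell\le km$ we obtain $\frac{\varepsilon}{2k}\le\frac{\varepsilon m}{2\ell}\le\frac{\varepsilon}{2}\le\frac12$, so the argument of $H$ lies in $\left[\frac{\varepsilon}{2k},\frac12\right]\subseteq[0,\tfrac12]$, an interval on which $H$ is monotonically increasing. Hence $H\!\left(\frac{\varepsilon m}{2\ell}\right)\ge H\!\left(\frac{\varepsilon}{2k}\right)$, and therefore $1-\frac12 H\!\left(\frac{\varepsilon m}{2\ell}\right)\le 1-\frac12 H\!\left(\frac{\varepsilon}{2k}\right)$, upgrading the bound to the stated $O^\ast\!\left(2^{\left(1-\frac12 H\left(\frac{\varepsilon}{2k}\right)\right)n}\right)$.

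The only step that needs care is this last one: because $H$ is not monotone on all of $[0,1]$, one must first confirm that $\varepsilon m/(2\ell)\le\tfrac12$ before bounding it below by the smaller value $\varepsilon/(2k)$, and this is precisely where $\ell\ge m$ and $\varepsilon\le 1$ enter. Everything else is a routine substitution into \cref{th:CSP}, so I do not anticipate a genuine obstacle.
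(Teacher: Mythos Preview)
Your proposal is correct and follows the same approach as the paper: apply \cref{th:CSP} with $\overbar w=m/2$ (from $m^\ast\ge m/2$) and $\delta=2$, then use $\ell\le km$ to pass from $H(\varepsilon m/(2\ell))$ to $H(\varepsilon/(2k))$. The paper only sketches this in one sentence before stating the corollary; your version spells out the feasibility check for $\delta=2$ and the monotonicity argument for $H$ on $[0,\tfrac12]$, both of which are exactly the right details to add.
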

Since
\begin{align*}
    2^{\left(1-\frac{1}{2}H\left(\frac{\varepsilon}{2k}\right)\right)n}
    &=
    \left(2-\Omegae\left(H\left(\frac{\varepsilon}{2k}\right)\right)\right)^n\\
    &=
    \left(2-\Omegae\left(\frac{\varepsilon}{k}\log\left(\frac{k}{\varepsilon}\right)\right)\right)^n,
\end{align*}
\cref{cor:maxsat} implies \cref{th:main}.

\section*{Acknowledgments}%
Part of this work was done when visiting the Simons Institute for the Theory of Computing. 
SG is supported by DFG grants 563388236 and 450041824, BMFTR project PhoQuant (13N16103), and project PhoQC from the State of Northrhine Westphalia. 
ZL is supported by the U.S. Department of Energy, Office of Science, National Quantum Information Science Research Centers, Quantum Systems Accelerator, and by NSF Grant CCF 2311733. 
FLG is supported by JSPS KAKENHI grants JP20H05966, 24H00071 and MEXT Q-LEAP grant JPMXS0120319794. 
NS is supported by the Austrian Science Fund FWF (Grant DOIs 10.55776/COE1, 10.55776/P36305 and 10.55776/F71),  the European Union -- NextGenerationEU, and the European Union’s Horizon 2020 research and innovation programme through Grant No.\ 863476 (ERC-CoG \mbox{SEQUAM}).  
ST is  supported by JSPS KAKENHI grants JP20H05961, JP20H05967, JP22K11909.

\begin{thebibliography}{1}

\bibitem{Alman+SODA20}
Josh Alman, Timothy~M. Chan, and Ryan Williams.
\newblock Faster deterministic and {Las Vegas} algorithms for offline approximate nearest neighbors in high dimensions.
\newblock In {\em Proceedings of the 2020 {ACM-SIAM} Symposium on Discrete Algorithms (SODA 2020)}, pages 637--649, 2020.

\bibitem{Avidor+05}
Adi Avidor, Ido Berkovitch, and Uri Zwick.
\newblock Improved approximation algorithms for {MAX NAE-SAT} and {MAX SAT}.
\newblock In {\em Proceedings of the Third International Workshop on Approximation and Online Algorithms (WAOA 2005)}, pages 27--40, 2006.

\bibitem{Buhrman+25}
Harry Buhrman, Sevag Gharibian, Zeph Landau, Fran\c{c}ois Le~Gall, Norbert Schuch, and Suguru Tamaki.
\newblock Beating the natural {Grover} bound for low-energy estimation and state preparation.
\newblock {\em Physical Review Letters}, 135:030601, 2025.

\bibitem{Escoffier+14}
Bruno Escoffier, Vangelis~Th. Paschos, and Emeric Tourniaire.
\newblock Approximating {MAX} {SAT} by moderately exponential and parameterized algorithms.
\newblock {\em Theoretical Compututer Science}, 560:147--157, 2014.

\bibitem{Hirsch03}
Edward~A. Hirsch.
\newblock Worst-case study of local search for {MAX-$k$-SAT}.
\newblock {\em Discrete Applied Mathematics}, 130(2):173--184, 2003.

\bibitem{PaturiPSZ98}
Ramamohan Paturi, Pavel Pudl{\'{a}}k, Michael~E. Saks, and Francis Zane.
\newblock An improved exponential-time algorithm for $k$-{SAT}.
\newblock In {\em Proceedings of the 39th Annual Symposium on Foundations of Computer Science ({FOCS} 1998)}, pages 628--637, 1998.

\bibitem{PaturiPZ97}
Ramamohan Paturi, Pavel Pudl{\'{a}}k, and Francis Zane.
\newblock Satisfiability coding lemma.
\newblock In {\em Proceedings of the 38th Annual Symposium on Foundations of Computer Science ({FOCS} 1997)}, pages 566--574, 1997.

\bibitem{SchoningFOCS99}
Uwe Sch\"oning.
\newblock A probabilistic algorithm for $k$-{SAT} and constraint satisfaction problems.
\newblock In {\em Proceedings of the 40th Annual Symposium on Foundations of Computer Science (FOCS 1999)}, pages 410--414, 1999.

\bibitem{WilliamsICALP04}
Ryan Williams.
\newblock A new algorithm for optimal constraint satisfaction and its implications.
\newblock In {\em Proceedings of the 31st International Colloquium on Automata, Languages and Programming (ICALP 2004)}, volume 3142 of {\em Lecture Notes in Computer Science}, pages 1227--1237, 2004.

\end{thebibliography}

\end{document}